\let\oldFootnote\footnote
\newcommand\nextToken\relax
\renewcommand\footnote[1]{%
    \oldFootnote{#1}\futurelet\nextToken\isFootnote}
\newcommand\isFootnote{%
    \ifx\footnote\nextToken\textsuperscript{,}\fi}
\newtheorem{theorem}{Theorem}
\newtheorem{lemma}{Lemma}
\newtheorem{proposition}{Proposition}
\newtheorem{claim}{Claim}
\newtheorem{corollary}{Corollary}
\newcommand{\fb}{\mathfrak{b}}
\newcommand{\fg}{\mathfrak{g}}
\newcommand{\eps}{\varepsilon}
\newcommand{\E}{\mathbb{E}}
\newcommand{\Prob}{\mathbb{P}}
\DeclareMathOperator*{\argmax}{arg\,max}
\def\ee{\mathrm{e}}
\begin{document}

\title{Learning about Informativeness}

\author{Wanying Huang\thanks{
Acknowledgements. This work is based on the first chapter of my PhD dissertation at Caltech, and was awarded ``best student paper'' at EC'24. Part of the work was done while the author was supported by a PIMCO Graduate Fellowship in Data Science at Caltech.  \\
I am indebted to my PhD advisor Omer Tamuz for his continued support and encouragement. I am grateful to (in alphabetical order) Boğaçhan Çelen, Krishna Dasaratha, Laura Doval, Federico Echenique, Amanda Friedenberg, Wade Hann-Caruthers, SangMok Lee, Kirby Nielson, Luciano Pomatto, Fan Wu, and seminar and conference participants for their helpful comments and suggestions.\\\
Department of Economics, Monash Business School. Email: \texttt{kate.huang@monash.edu}.}}

\date{June 26, 2025}

\maketitle

\begin{abstract}
We study a sequential social learning model in which there is uncertainty about the informativeness of a common signal-generating process. Rational agents arrive in order and make decisions based on the past actions of others and their private signals. We show that, in this setting, asymptotic learning about informativeness is not guaranteed and depends crucially on the relative tail distributions of the private beliefs induced by uninformative and informative signals. We identify the phenomenon of perpetual disagreement as the cause of learning and characterize learning in the canonical Gaussian environment.  
\end{abstract}

\section{Introduction}
Social learning plays a vital role in the dissemination and aggregation of information. The behavior of others reflects their private knowledge about an unknown state of the world, and so by observing others, individuals can acquire additional information, enabling them to make better-informed decisions. A key assumption in most existing social learning models is the presence of an informative source that provides a useful private signal to each individual. In this paper, we explore how the possibility that the source is uninformative interferes with learning, and study the conditions under which individuals can eventually distinguish an uninformative source from an informative one. This question is particularly relevant today due to the proliferation of novel information technologies, raising concerns about the accuracy and credibility of the information they provide.

Formally, we introduce uncertainty regarding the informativeness of the source into the
classic sequential social learning model \citep*{banerjee1992simple, BichHirshWelch:92, smith2000pathological}. As usual, a sequence of short-lived agents arrives in order, each acting once by choosing an action to match an unknown payoff-relevant state that can be either good or bad. Before making their decisions, each agent observes the past actions of her predecessors and receives a private signal from a common source of information. However, unlike in the usual setting, there is uncertainty surrounding this common information source.  In particular, we assume that this source can be either informative, generating private signals that are independent and identically distributed (i.i.d.) conditioned on the payoff-relevant state, or uninformative, producing private signals that are i.i.d.\ but independent of the payoff-relevant state. Both the payoff-relevant state and the informativeness of the source are realized independently at the outset and are assumed to be fixed throughout.


If an outside observer, who aims to evaluate the informativeness of the source, were to have access to the private signals received by the agents, he would gradually accumulate empirical evidence about the source and eventually learn its informativeness. However, when only the history of past actions is observable, his inference problem becomes more challenging\textemdash not only because there is less information available, but also because these past actions are correlated with each other. This correlation arises from social learning behavior, where agents' decisions are influenced by the inferences they draw from observing others' actions. We say that \emph{asymptotic  learning of informativeness} holds if the outside observer's belief about the source's informativeness converges to the truth, i.e., it converges almost surely to one when the source is informative and to zero when it is uninformative. The questions we aim to address are: Can learning about informativeness be achieved asymptotically, and if so, under what conditions? Furthermore, what are the behavioral implications of such learning? 


We consider \emph{unbounded signals} under which the agent's private belief induced by an informative signal can be arbitrarily strong. We focus on this setting since otherwise learning is precluded by agents' lack of response to their private signals.\footnote{As shown by \cite{smith2000pathological}, with bounded signals, an information cascade\textemdash where agents stop responding to their private signals\textemdash would be triggered, blocking further information aggregation. In contrast, cascades do not form with unbounded signals, thus allowing information to continue accumulating through agents' actions.} Our main result (Theorem \ref{thm:mainthm}) shows that even with unbounded signals, achieving asymptotic learning of informativeness is far from guaranteed. The key factor in determining such learning lies in the tail distributions of agents' private beliefs. Specifically, it depends on whether the belief distribution induced by uninformative signals has \emph{fatter} or \emph{thinner} tails compared to that induced by informative signals. We show that asymptotic learning of informativeness holds when uninformative signals have fatter tails than informative signals, but fails when uninformative signals have thinner tails.

For example, consider an informative source that generates Gaussian signals with unit variance and a mean of $+1$ if the payoff-relevant state is good and a mean of $-1$ if the state is bad. Meanwhile, the uninformative source generates Gaussian signals with mean $0$, independent of the state. If the uninformative source generates signals with a variance strictly greater than one, then the uninformative signals have fatter tails, and thus asymptotic learning of informativeness holds. In contrast, when the uninformative signals have a variance strictly less than one, they exhibit thinner tails, and so asymptotic learning of informativeness fails.


The mechanism behind the main result is as follows. First, in our model, despite information uncertainty, agents always act as if the signals are informative.  Therefore, when the source is indeed informative and generates unbounded signals, agents will eventually take the correct action. Now, suppose the source is uninformative and generates signals with thinner tails. In this case, it is unlikely that agents will receive signals extreme enough to break the consensus on actions, so they usually mimic their predecessors. As a result, an outside observer who only observes agents' actions cannot discern that the source is uninformative, as an action consensus will be reached under both uninformative and informative sources. In contrast, suppose the source is uninformative but generates signals with fatter tails. In this scenario, extreme signals are more likely, allowing agents to break the consensus; in fact, both actions will be taken infinitely often, so agents will never settle on an action consensus. Hence, an outside observer who observes an infinite number of action switches learns that the source is uninformative.

For some private belief distributions, their relative tail thickness is neither thinner nor fatter. For these, we show that the same holds: Asymptotic learning of informativeness is achieved if and only if conditioned on the source being uninformative, agents never settle on an action consensus (Proposition \ref{thm:main}). In terms of behavior, as mentioned before, when the source is informative, agents eventually choose the correct action, regardless of information uncertainty. 
In contrast, when the source is uninformative, agents are clearly not guaranteed to settle on the correct action; in fact, their actions may or may not converge at all. As demonstrated by Proposition \ref{thm:main}, an outside observer eventually learns the informativeness of the source if and only if the agents' actions do not converge when the source is uninformative.

\subsection*{Related Literature}
Our paper contributes to a rich literature on sequential social learning. Assuming that the common source of information is always informative, the primary focus of this literature has been on determining whether agents can eventually learn to choose the correct action. Various factors, such as the information structure \citep*{banerjee1992simple, BichHirshWelch:92, smith2000pathological} and the observational networks \citep*{ccelen2004observational, acemoglu2011bayesian, lobel2015information}, have been extensively studied to analyze their impact on information aggregation, including its efficiency \citep*{rosenberg2019efficiency} and the speed of learning \citep*[e.g.,][]{vives1993fast,hann2018speed}. However, the question of learning about the informativeness of the source\textemdash which is the focus of this paper\textemdash remains largely unexplored.\footnote{For comprehensive surveys on recent developments in the social learning literature, see e.g., \cite*{golub2017learning, bikhchandani2021information}.}

A few papers explore the idea of agents having access to multiple sources of information in the context of social learning. For example, \cite*{liang2020complementary} consider a model in which agents endogenously choose from a set of correlated information sources, and the acquired information is then made public and learned by other agents. They focus on the externality in agents' information acquisition decisions and show that information complementarity can result in either efficient information aggregation or ``learning traps,'' in which society gets stuck in choosing suboptimal information structures. In a different setting, \cite*{chen2022sequential} examines a sequential social learning model in which ambiguity-averse agents have access to different sources of information. Consequently, information uncertainty arises in his model because agents are unsure about the signal precision of their predecessors. He shows that under sufficient ambiguity aversion, there can be information cascades even with unbounded signals. Our paper differs from these prior works as we focus on rational agents with access to a common source of information of unknown informativeness.

Another way of viewing our model is by considering a social learning model with four states: The source is either informative with the good or bad state, or uninformative with either the good or bad state. In such multi-state settings, recent work by \cite*{arieli2021general} demonstrates that pairwise unbounded signals are necessary and sufficient for learning, when the decision problem that agents face includes a distinct action that is uniquely optimal for each state. This is not the case in our model, because the same action is optimal in different states, e.g., when the source is uninformative, and so even when agents observe a very strong signal indicating that the state is uninformative, they do not reveal it in their behavior. 

More recently, \cite*{kartik2022beyond} consider a setting with multiple states and actions on general sequential observational networks. They identify a sufficient condition for learning \textemdash ``excludability'' \textemdash that jointly depends on the information structure and agents' preferences. Roughly speaking, this condition ensures that agents can always displace the wrong action, which is their driving force for learning. In our model, when the source is uninformative, agents cannot displace the wrong action as all signals are pure noise.\footnote{This observation can also be seen from Theorem 2 in \cite*{kartik2022beyond}.} Conceptually, our approach differs from theirs as we are interested in identifying the uninformative state from the informative one, instead of identifying the payoff-relevant state.

Our paper is also related to the growing literature on social learning with misspecified models. \cite*{bohren2016informational} investigates a model where agents fail to account for the correlation between actions, demonstrating that different degrees of misperception can lead to distinct learning outcomes. In a broader framework, \cite*{bohren2021learning} show that learning remains robust to minor misspecifications. In contrast, \cite*{frick2020misinterpreting} find that an incorrect understanding of other agents' preferences or types can result in a severe breakdown of information aggregation, even with a small amount of misperception. Later, \cite*{frick2023belief} propose a unified approach to establish convergence results in misspecified learning environments where the standard martingale approach fails to hold. On a more positive note, \cite*{arieli2023hazards} illustrate that by being mildly condescending\textemdash misperceiving others as having slightly lower-quality of information\textemdash agents may perform better in the sense that on average, only finitely many of them take incorrect actions.

\section{Model}
There is an unknown binary state of the world $\theta \in \{\fg, \fb\}$, chosen at time $0$ with equal probability. We refer to $\fg$ as the good state and $\fb$ as the bad state. Concurrently, nature chooses an additional binary state $\omega \in \{0, 1\}$, independent of $\theta$, where $\omega=1$ with probability $\gamma \in (0,1)$.\footnote{Our results do not rely on the independence between $\theta$ and $\omega$. They hold true as long as conditioned on $\omega=0$, both realizations of $\theta$ are equally likely.} A countably infinite set of agents indexed by time $t \in \mathbbm{N}=\{1, 2, \ldots\}$ arrive in order, each acting once. The action of agent $t$ is $a_t \in  A =\{\fg, \fb\}$, with a payoff of one if her action matches the state $\theta$ and zero otherwise. Before agent $t$ chooses an action, she observes the history of her predecessors' actions $H_t = (a_1, \ldots, a_{t-1})$, and receives a private signal $s_t$, taking values in a measurable space $(S, \Sigma)$. 

The informativeness of private signals depends on $\omega$, which determines the signal-generating process for all agents. If $\omega=1$, the source is informative, and conditional on $\theta$, signals are drawn i.i.d.\ across agents from a distribution $\mu_{\theta}$. If $\omega=0$, the source is uninformative, and signals are drawn i.i.d.\ from a distribution $\mu_0$ that is independent of $\theta$. We denote by $\Prob_{0}[\cdot] := \Prob[\cdot|~ \omega = 0]$ and $\Prob_{1}[\cdot] := \Prob[\cdot |~ \omega = 1]$ the conditional probability distributions given $\omega = 0$ and $\omega=1$, respectively. Similarly, we use the notation $\Prob_{1, \fg}[\cdot]:= \Prob[\cdot|~ \omega = 1, \theta = \fg]$ to denote the conditional probability distribution given $(\omega, \theta) = (1, \fg)$. We use an analogous notation for $(\omega, \theta) = (1, \fb)$. 

\paragraph{Agents' Behavior.} A pure strategy of agent $t$ is a measurable function $\sigma_t: A^{t-1} \times S \to A$ that selects an action for each possible pair of observed history and private signal. A pure strategy profile $\sigma = (\sigma_t)_{t\in \mathbbm{N}}$ is a collection of pure strategies of all agents. A strategy profile is a Bayesian Nash equilibrium\textemdash referred to as equilibrium hereafter\textemdash if no agent can unilaterally deviate from this profile and obtain a strictly higher expected payoff conditioned on their information. Given that each agent acts only once, the existence of an equilibrium is guaranteed by a simple inductive argument. In equilibrium, each agent $t$ chooses the action $a_t$ that maximizes her expected payoff given the available information: 
\begin{align}\label{eq:optimal_action_gen}
    a_t \in \argmax_{a \in A} \E[\mathbbm{1}(\theta = a) | H_t, s_t]. 
\end{align}
Below, we make a continuity assumption which implies that agents are never indifferent, and so there is a unique equilibrium.


We first observe that, despite the uncertainty regarding the informativeness of the source, in equilibrium, each agent chooses the action that is most likely to match the state, \emph{conditional} on the source being informative. 
\begin{lemma} \label{claim:opt_action}
    The equilibrium action for each agent $t$ is 
    \begin{equation}
   \label{eq:optimal_action}
    a_t \in \argmax_{a \in A} \Prob_1[\theta = a | H_t, s_t]. 
     \end{equation}
\end{lemma}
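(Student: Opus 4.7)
The plan is to decompose the agent's posterior over $\theta$ by conditioning on the informativeness state $\omega$, and then to show that the uninformative piece is entirely uninformative about $\theta$, so that the optimization reduces to the informative conditional.

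Specifically, by the law of total probability,
\begin{align*}
\Prob[\theta=a \mid H_t, s_t]
&= \Prob[\omega=0 \mid H_t, s_t]\,\Prob_0[\theta=a \mid H_t, s_t] \\
&\quad + \Prob[\omega=1 \mid H_t, s_t]\,\Prob_1[\theta=a \mid H_t, s_t].
\end{align*}
The first step is to establish that $\Prob_0[\theta=a \mid H_t, s_t] = \tfrac{1}{2}$ for every $a \in A$. Since only that first summand fails to depend on $a$, this immediately makes $\argmax_a \Prob[\theta=a \mid H_t, s_t]$ coincide with $\argmax_a \Prob_1[\theta=a \mid H_t, s_t]$, because $\Prob[\omega=1 \mid H_t, s_t]>0$ almost surely (the prior on $\omega=1$ is positive and the event $\{H_t,s_t\}$ has positive density under both values of $\omega$).

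The main work, and the only real obstacle, is showing that conditional on $\omega=0$ the pair $(H_t, s_t)$ is independent of $\theta$. I would prove this by induction on $t$. The base case $t=1$ is immediate: conditional on $\omega=0$, the signal $s_1 \sim \mu_0$ is by construction independent of $\theta$, and there is no history. For the inductive step, assume that for every $k \le t-1$ the action $a_k$ is a measurable function of $(H_k, s_k)$ (this is true since $\sigma_k$ is pure) and that, conditional on $\omega=0$, the tuple $(H_k, s_k)$ is independent of $\theta$. Then $a_k$ is itself conditionally independent of $\theta$ given $\omega=0$. Since signals are i.i.d.\ across agents conditional on $\omega=0$ and independent of $\theta$, the new signal $s_t$ is independent of both $\theta$ and of the previous signals/actions given $\omega=0$. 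Combining these yields that $(H_t, s_t) = (a_1,\ldots,a_{t-1}, s_t)$ is independent of $\theta$ given $\omega=0$, as desired.

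Having established conditional independence, I would conclude: $\Prob_0[\theta=a \mid H_t, s_t] = \Prob_0[\theta=a] = \Prob[\theta=a] = \tfrac{1}{2}$, where the second equality uses the independence of $\theta$ and $\omega$ at time $0$ and the third uses the uniform prior on $\theta$. Substituting back into the decomposition, the first summand is a constant in $a$, so the equilibrium action solves \eqref{eq:optimal_action}. The argument shows that the uniform prior on $\theta$ together with $\theta \perp \omega$ is what drives the reduction: under these assumptions, an uninformative source contributes no directional preference over $A$, so the agent optimally bets as if the source were informative.
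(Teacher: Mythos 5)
Your proof is correct and takes essentially the same approach as the paper's: decompose the posterior on $\theta$ by conditioning on $\omega$, use $\theta\perp\omega$ and the uniform prior to show $\Prob_0[\theta=a\mid H_t,s_t]=1/2$, and conclude that the argmax reduces to the $\omega=1$ conditional. The only difference is cosmetic\textemdash you spell out via induction the conditional independence of $(H_t,s_t)$ from $\theta$ given $\omega=0$, which the paper treats as immediate, and you work with the law-of-total-probability decomposition rather than the paper's equivalent ratio form.
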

In essence, agents always respond to their private signals, acting as if they are informative irrespective of the underlying signal-generating process. Intuitively, treating signals as informative\textemdash even when they are pure noise\textemdash does not adversely affect agents' payoffs, since in the absence of any useful information, each agent with a uniform prior is indifferent between the available actions. As we discuss in \S \ref{sec:discussion}, this responsiveness breaks down under a non-uniform prior; consequently, the informativeness of the source can never be fully revealed.

\paragraph{Information Structure.}
We assume the distributions $\mu_\fg, \mu_{\fb}$, and $\mu_{0}$ are distinct\footnote{Formally, two distributions are distinct if there exists a positive measure set to which they assign a different measure. From an economic perspective, requiring $\mu_0$ to differ from both $\mu_\fg$ and $\mu_\fb$ rules out ``fake'' information technologies that can perfectly mimic the informative source at one realization of $\theta$ while being uninformative. Likewise,  $\mu_\fg \neq \mu_\fb$ ensures that signals are informative in the informative state.} and mutually absolutely continuous, so no signal fully reveals either state $\theta$ or $\omega$. As a consequence, conditioned on $\omega=1$, the log-likelihood ratio of any signal $s$
\[
\ell = \log \frac{d\mu_{\fg}}{d\mu_{\fb}}(s),
\]
is well-defined. We refer to $\ell$ as the \emph{agent's private log-likelihood ratio (LLR)}. Since regardless of $\omega$, agents always act as if the signals are informative, it is sufficient to consider $\ell$ to capture their behavior.\footnote{Formally, the sequence of actions $a_1, \ldots,  a_t$ is determined by $\ell_1, \ldots, \ell_t$. } Let $F_\fg$ and $F_\fb$ be the CDFs of $\ell$ under an informative source ($\omega=1$) when the state $\theta$ is $\fg$ and $\fb$, respectively. Let $F_0$ denote the CDF of $\ell$ under an uninformative source ($\omega =0$). Note that these CDFs are mutually absolutely continuous, as $\mu_\fg, \mu_\fb$ and $\mu_0$ are. Let $f_\fg, f_{\fb}$ and $f_{0}$ denote the corresponding density functions of $F_\fg$, $F_\fb$ and $F_{0}$ whenever they are differentiable. 

We focus on \emph{unbounded signals}, where $\ell$ can take on arbitrarily large or small values: for any $M \in \mathbbm{R}$, there exists a positive probability that $\ell > M$ and a positive probability that $\ell < -M$. We informally refer to a signal $s$ as \emph{extreme} when the corresponding $\ell$ it induces has a large absolute value.  A common example of unbounded private signals is the case of Gaussian signals, where $s$ follows a normal distribution $\mathcal{N}(m_{(\omega, \theta)}, \sigma^2)$ with variance $\sigma^2$ and mean $m_{(\omega, \theta)}$ that depends on the pair of states $(\omega, \theta)$. An extreme Gaussian signal is a signal that is, for example, $5-\sigma$ away from the mean $m_{(\omega, \theta)}$.

We make two assumptions for expository simplicity. First, we assume that the pair $(F_{\fg}, F_{\fb})$ of informative conditional CDFs is symmetric around zero, i.e., $F_{\fg}(x) + F_{\fb}(-x) = 1$. Given that the prior on $\theta$ is uniform, this assumption makes our model invariant to a relabeling of $\theta$.\footnote{Our main result can be easily extended to a non-symmetric case since our notion of relative tail thickness does not require symmetry.} Second, we assume that all conditional CDFs\textemdash $F_{\fg}$, $ F_{\fb}$, and $F_0$\textemdash are continuous, so agents are never indifferent between actions.

In addition, we assume that $F_{\fb}$ has a differentiable left tail, i.e., it is differentiable for all sufficiently negative $x$, and its density function $f_{\fb}$ satisfies the condition that $f_{\fb}(-x) < 1$ for all $x$ large enough. By symmetry, this implies that $F_{\fg}$ also has a differentiable right tail and its density function $f_{\fg}$ satisfies the condition that $f_{\fg}(x)<1$ for all $x$ large enough. This is a mild technical assumption that holds for every non-atomic distribution commonly used in the literature, including the Gaussian distribution. It holds, for instance, whenever the density tends to zero at infinity.

\paragraph{Asymptotic Learning of Informativeness.} 
We now define our notion of asymptotic learning. Let $q_t:= \Prob[\omega=1|H_t]$ be the belief that an outside observer assigns to the source being informative after observing the history of agents' actions from time $1$ to $t-1$. As this observer collects more information over time, his belief $q_t$ converges almost surely since it is a bounded martingale.
Formally, we say that \emph{asymptotic learning of informativeness} holds if  for all $\omega \in \{0,1\}$, 
\[
\lim_{t\to \infty} q_t  = \omega \quad \Prob_\omega\text{-almost surely.}
\]    
That is, the outside observer's belief eventually converges to the truth. In \S \ref{sec:discussion}, we discuss the relationship between asymptotic learning of informativeness and other notions of learning.  

\section{Main Results}
Before stating the results, we introduce the concept of relative tail thickness, which captures the relative likelihood of generating extreme signals from different sources.  

\paragraph{Relative Tail Thickness.} For any pair of CDFs $(F_0, F_\theta)$ where $\theta \in \{\fg, \fb\}$, we denote their corresponding ratios for any $x \in \mathbbm{R}_+$ by  
\[
L_\theta(x) := \frac{F_0(-x)}{F_\theta(-x)} \quad \text{and} \quad R_\theta(x) := \frac{1-F_0(x)}{1-F_\theta(x)}. 
\]
For large $x$, these represent the left and right tail ratios of $F_0$ over $F_\theta$. Formally, we say the uninformative signals have \emph{fatter tails} than the informative signals if there exists an $\eps >0$ such that 
    \begin{align*}
        L_\fb( x) & \geq \eps \quad \text{for all $x$ large enough,}  \\
       \text{and} \quad  R_\fg( x) & \geq \eps \quad  \text{for all $x$ large enough.}
    \end{align*}
We say the uninformative signals have \emph{thinner tails} than the informative signals if there exists an $\eps >0$ such that 
    \begin{align*}
        \text{either} \quad  L_\fg(x) & \leq 1/\eps \quad \text{for all $x$ large enough,}\\
        \text{or} \quad  R_\fb(x) & \leq 1/\eps \quad \text{for all $x$ large enough.}
    \end{align*}
Intuitively, compared to informative signals, uninformative signals with fatter tails are more likely to exhibit extreme values. Thus, by Bayes' Theorem, observing an extreme signal suggests that the source is uninformative. In contrast, uninformative signals with thinner tails tend to exhibit moderate values, so observing an extreme signal in this case suggests that the source is informative.

As an example of uninformative signals with fatter tails, consider those represented by $F_0 = \frac{1}{2}(F_\fg + F_\fb)$, a mixture distribution between $F_\fg$ and $F_\fb$.\footnote{More generally, it is straightforward to see that any positive mixture distribution $F_0 = \alpha F_\fg + (1-\alpha) F_\fb$ where $\alpha \in (0,1)$ has fatter tails.} Note that this mixture distribution $F_0$ coincides with the unconditional distribution of $\ell$ under an informative source. Thus, one can view it as representing an uninformative source that is a priori indistinguishable from the informative one. By contrast, consider an uninformative distribution $F_0$ that first-order stochastically dominates $F_\fg$, i.e., $F_0(x)  \leq F_\fg(x)$ for all $x \in \mathbb{R}$. In other words, the uninformative signals are more likely to exhibit high values than the informative signals associated with the good state. It follows immediately from the resulting inequality, i.e., $L_\fg(x) \leq  1$ for all $x \in \mathbbm{R}$, that the uninformative source has thinner tails.

Now, we are ready to state our main result (Theorem \ref{thm:mainthm}). It shows that the key determinant of asymptotic learning of informativeness is the relative tail thickness between the uninformative and informative signals. 

\begin{theorem} \label{thm:mainthm}
When the uninformative signals have fatter (thinner) tails than the informative signals, asymptotic learning of informativeness holds (fails).
\end{theorem}
This result demonstrates that an observer learning from agents' actions can eventually distinguish between informative and uninformative sources, but only if the uninformative source generates signals with greater dispersion. In contrast, when the uninformative signals are relatively concentrated, this differentiation becomes impossible.

The idea behind our proof of Theorem \ref{thm:mainthm} is as follows. First consider the case where the source is informative. In this case, the likelihood of generating extreme signals that overturn a long streak of correct action consensus decreases rapidly. Consequently, agents will eventually choose the correct action since they always treat signals as informative. Now, suppose that the source is uninformative, and instead of reaching a consensus, agents continue to disagree indefinitely, leading to both actions being taken infinitely often. If this were the case, an outside observer would eventually be able to distinguish between informative and uninformative sources, as they induce distinct behavioral patterns among agents. Whether these disagreements persist or not depends on whether the tails of the uninformative signals are thick enough to generate these overturning extreme signals. 

In summary, when the tails of uninformative signals are sufficiently thick, overturning extreme signals occur frequently enough so that disagreements persist. Conversely, when the tails are relatively thin, these signals are less likely to occur and disagreements eventually cease. Hence, we conclude that the relative tail thickness between uninformative and informative signals plays an important role in determining the achievement of asymptotic learning of informativeness.

In the case of Gaussian signals, the relative tail thickness of normal distributions is determined solely by their variances. An immediate corollary is the following:
\begin{corollary} \label{cor:gaussian_variance}
For normal private signals with informative and uninformative variances $\sigma^2$ and $\tau^2$, respectively, asymptotic learning of informativeness holds if $\tau > \sigma$ and fails if $\tau < \sigma$.
\end{corollary}

\section{Analysis}
In this section, we first analyze how agents update their beliefs. We then use two important phenomena to characterize asymptotic learning of informativeness. This leads to a complete proof of Theorem \ref{thm:mainthm} at the end of this section. 

\subsubsection*{Agents' Belief Dynamics} 
Since agents act as if signals are informative, to understand their behavior, it suffices to focus on their belief updating process conditioned on $\omega=1$. We denote the public belief of agent $t$, based on the history of actions $H_t$, by $\pi_t := \Prob_1[\theta=\fg|H_t]$. The corresponding LLR of $\pi_t$ is
$$r_t := \log \frac{\pi_t}{1-\pi_t} = \log \frac{\Prob_1[\theta=\fg|H_t]}{\Prob_1[\theta=\fb|H_t]}.$$
Similarly, we denote the LLR of agent $t$'s posterior belief, based on both $H_t$ and her private signal $s_t$ by
\[
L_t := \log \frac{\Prob_1[\theta = \fg|H_t, s_t]}{\Prob_1[\theta = \fb|H_t, s_t]}.
\]
Recall that $\ell_t$ is the private LLR of agent $t$' conditioned on $\omega=1$. By Bayes' rule, we can write  
\[
L_t = r_t + \ell_t.
\]
From \eqref{eq:optimal_action} we know that in equilibrium, agent $t$ chooses action $\fg$ if $\ell_t \geq -r_t$ and action $\fb$ if $\ell_t <-r_t$. Hence, conditioned on $\theta$ and the event $\omega=1$, the probability that $a_t =\fg$ is $1-F_{\theta}(-r_t)$ and the probability that $a_t= \fb$ is $F_\theta(-r_t)$. Consequently, the agents' public LLRs evolve as follows:
\begin{align}
    r_{t+1} = r_t + D_{\fg}(r_t) \quad \text{if~} a_t = \fg, \label{eq:highaction}\\
     r_{t+1} = r_t + D_{\fb}(r_t)  \quad \text{if~} a_t = \fb, \label{eq:lowaction}
\end{align}
where 
\begin{align*}
  D_{\fg}(r) := \log \frac{1-F_{\fg}(-r)}{1-F_{\fb}(-r)} \quad \text{and} \quad   D_{\fb}(r) := \log \frac{F_{\fg}(-r)}{F_{\fb}(-r)}.
\end{align*}
As this is a standard belief updating process, it satisfies well-known properties such as the \emph{overturning principle} and \emph{stationarity}. For completeness, we state them here. 
\begin{claim}[Overturning Principle]\label{claim:overturning}
    For each agent $t$, if $a_{t} = \fg$, then $\pi_{t+1} \geq 1/2$. Similarly, if $a_{t} = \fb$, then $\pi_{t+1} \leq 1/2$. 
\end{claim}
This property asserts that a single action is sufficient to change the verdict of $\pi_t$. We further write $\Prob_{\tilde\omega, \tilde\theta, \pi}$ to denote the conditional probability distribution given the pair of state realizations ($\tilde\omega, \tilde\theta$) while highlighting the different values of the prior $\pi$.

\begin{claim}[Stationarity]
    For any fixed sequence $(b_\tau)_{\tau=1}^k$ of $k$ actions in $\{\fg, \fb\}$, any prior $\pi \in (0,1)$ and any pair $(\tilde{\omega}, \tilde{\theta})  \in \{0, 1\} \times \{\fg, \fb\}$
\[
\Prob_{\tilde\omega, \tilde\theta}[a_{t+1} = b_1, \ldots, a_{t+k}=b_k|\pi_t = \pi] = \Prob_{\tilde\omega, \tilde\theta, \pi}[a_{1} = b_1, \ldots, a_{k}=b_k].
\]  
\end{claim}
This claim states that the value of $\pi_t$ captures all past information about the payoff-relevant state, independent of time. This holds in our model because, regardless of the informativeness of the source, agents always update their public LLRs according to either \eqref{eq:highaction} or \eqref{eq:lowaction}. 

\subsubsection*{Perpetual Disagreement and Immediate Agreement}
We introduce two events that are important for characterizing asymptotic learning of informativeness. All proofs for the results in this section are provided in the Appendix. 

The first is \emph{perpetual disagreement}\textemdash the event in which agents' actions never converge,  which we denote by 
$\{S= \infty\}$, where $S= \sum_{t= 1}^\infty \mathbbm{1}(a_t \neq a_{t+1})$ is the total number of action switches. The following proposition establishes the key mechanism underlying Theorem \ref{thm:mainthm}: it shows that learning about informativeness is equivalent to the almost sure occurrence of perpetual disagreement when the source is uninformative. 

\begin{proposition}\label{thm:main}
Asymptotic learning of informativeness holds if and only if conditioned on $\omega=0$, the event $\{S = \infty\}$ occurs almost surely.
\end{proposition}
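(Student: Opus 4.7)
I would prove the two directions separately, invoking the fact (used earlier, in the discussion following Theorem \ref{thm:mainthm}) that under an informative source with unbounded signals, agents almost surely reach consensus on the correct action, so $\Prob_1[S = \infty] = 0$. Let $\mathcal{F}_\infty = \sigma(a_1, a_2, \ldots)$; since $S$ is $\mathcal{F}_\infty$-measurable, the ``if'' direction is immediate. Assuming $\Prob_0[S = \infty] = 1$, combined with $\Prob_1[S = \infty] = 0$, the events $\{S = \infty\}$ and $\{\omega = 0\}$ agree up to $\Prob$-null sets. Thus $\omega$ is $\mathcal{F}_\infty$-measurable, and L\'evy's upward theorem gives $q_t \to q_\infty = \Prob[\omega = 1 \mid \mathcal{F}_\infty] = \omega$ almost surely.

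For the converse I would argue by contradiction. Assume asymptotic learning and $\Prob_0[S < \infty] > 0$. Decomposing this event over the consensus action, the last switch time $T$, and the prefix $h = (a_1, \ldots, a_T)$ yields by countable additivity a specific atom $\alpha = \{H_{T+1} = h,\ a_t = \fg \text{ for all } t > T\}$ with $\Prob_0[\alpha] > 0$ (WLOG the consensus is on $\fg$). Since $\alpha$ is an atom of $\mathcal{F}_\infty$, asymptotic learning forces $q_\infty|_\alpha = \Prob[\omega = 1 \mid \alpha] = 0$, equivalently $\Prob_1[\alpha] = 0$. I would derive a contradiction by showing $\Prob_{1,\fg}[\alpha] > 0$. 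The public threshold sequence $c_t$ (agent $t$ plays $\fg$ iff $\ell_t > -c_t$) is deterministic on $\alpha$ and evolves by $c_{t+1} - c_t = \log\bigl[(1 - F_\fg(-c_t))/(1 - F_\fb(-c_t))\bigr]$. By mutual absolute continuity of $\mu_\fg$ and $\mu_0$, $\Prob_{1,\fg}[H_{T+1} = h] > 0$, and $\Prob_0[\alpha] > 0$ forces $F_0(-c_t) \to 0$, hence $c_t \to \infty$. The inequality $F_\fg(-c) \leq e^{-c} F_\fb(-c)$ (immediate from $dF_\fg/dF_\fb = e^x$), the lower bound $c_{t+1} - c_t \geq \tfrac{1}{2} F_\fb(-c_t)$ for $c_t$ large, and the monotonicity of $c \mapsto e^{-c}$ then combine via a Riemann-sum estimate to give
\[
\sum_{t > T_*} F_\fg(-c_t) \;\leq\; 2e \int_{c_{T_*+1}}^{\infty} e^{-c}\,dc \;<\; \infty
\]
for $T_*$ large, whence $\prod_{t > T}(1 - F_\fg(-c_t)) > 0$ and $\Prob_{1,\fg}[\alpha] > 0$, the desired contradiction.

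The main obstacle is precisely the summability estimate in the display. The hypothesis $\Prob_0[\alpha] > 0$ directly yields only $\sum F_0(-c_t) < \infty$, which in isolation says nothing about $\sum F_\fg(-c_t)$, and the crude bound $F_\fg(-c) \leq e^{-c}$ is too weak because the consensus threshold $c_t$ can grow only at sub-logarithmic rate. The leverage comes from the sharper $F_\fg \leq e^{-c} F_\fb$ together with the observation that $F_\fb(-c_t)$ is (up to a factor of two) the threshold increment $c_{t+1} - c_t$ itself; this converts the sum into a monotone Riemann sum for the convergent integral $\int e^{-c}\,dc$, finite regardless of the growth rate of $c_t$.
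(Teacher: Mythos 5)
Your proof is correct, and in both directions it takes a genuinely different route from the paper's. For the ``if'' direction the paper reasons via a hypothetical outside observer who plays a threshold guessing strategy on the number of action switches, then upgrades the resulting $\Prob_0[q_\infty<1/2]=\Prob_1[q_\infty\ge 1/2]=1$ to full convergence using a ``no introspection'' inequality (Claim~\ref{claim:ineq}). Your route via L\'evy's upward theorem is cleaner: once $\{S=\infty\}$ and $\{\omega=0\}$ are identified up to $\Prob$-null sets, $\omega$ is $\mathcal F_\infty$-measurable and $q_t\to\omega$ a.s.\ follows in one line, bypassing Claim~\ref{claim:ineq} entirely. For the ``only-if'' direction the paper argues by contraposition: from $\Prob_0[S<\infty]>0$ it produces an infinite history $\tilde H_\infty$ with $\Prob_0[\tilde H_\infty]>0$ and $\Prob_1[\tilde H_\infty]>0$, and Bayes then gives $q_\infty\in(0,1)$ on that history. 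The substantive content is that a positive-$\Prob_0$ atom is also positive-$\Prob_1$; the paper delegates this to Lemma~\ref{lem:informative}(ii), proved via correct herding, overturning, and stationarity (with the extension to all priors noted only in a footnote). Your Riemann-sum estimate
\[
\sum_{t>T_*}F_\fg(-c_t)\;\le\;2\sum_{t>T_*}e^{-c_t}(c_{t+1}-c_t)\;\le\;2e\int_{c_{T_*+1}}^{\infty}e^{-c}\,dc\;<\;\infty
\]
is a direct, self-contained derivation of exactly this fact, exploiting the likelihood-ratio identity $dF_\fg/dF_\fb=e^x$ together with the observation that the threshold increment $c_{t+1}-c_t=D_\fg(c_t)$ is comparable to $F_\fb(-c_t)$ once $c_t$ is large. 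It is a nice alternative to the paper's martingale/stationarity route: notably, the estimate uses nothing about the trajectory beyond $c_t\uparrow\infty$, which is automatic since $D_\fg>0$ and by Lemma~\ref{lemma:rg}(i), so in effect you are re-proving Lemma~\ref{lem:informative}(ii) and the footnote's all-prior extension by bare calculation. One small bookkeeping point worth spelling out in a full write-up: you need the increments $c_{t+1}-c_t$ to be eventually $\le 1$ for the factor $e$ in the Riemann bound, which holds because $D_\fg(r)\to 0$ as $r\to\infty$.
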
 
Intuitively, if agents never reach a consensus under an uninformative source, then the outside observer infers that the source must be uninformative since agents would have reached a consensus under an informative source. Conversely, suppose agents could also reach a consensus when the source is uninformative. This implies that action convergence is plausible under both informative and uninformative sources. Consequently, the observer can no longer be sure of the source's informativeness.

The second important event is \emph{immediate agreement}, where agents reach a consensus from the outset. We denote such an immediate agreement on action $a$ by $\{\bar a = a\} = \{a_1= a_2= \ldots = a\}$. The next lemma shows that when the source is informative, immediate agreement on the wrong action is impossible, whereas immediate agreement on the correct action is possible. For brevity, we state this result only for the
case where $\theta = \fg$, as the analogous statements hold for $\theta = \fb$ by symmetry.

\begin{lemma}\label{lem:informative} Conditioned on $\omega = 1$ and $\theta= \fg$, the following two conditions hold: 
\begin{compactenum}[(i)]  
    \item The event $\{\bar{a} = \fb \}$ occurs with probability zero.   
    \item The event  $\{\bar{a} = \fg \}$ occurs with positive probability for some prior $\pi \in (0,1)$.
    \end{compactenum}
\end{lemma}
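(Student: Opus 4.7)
The plan is to handle the two parts separately. Part (i) follows quickly from the correct-herding theorem of \cite{smith2000pathological} applied under the conditional measure $\Prob_{1,\fg}$, while part (ii) rests on a martingale maximal inequality applied to the public belief.

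For part (i), I would argue by contradiction. By Lemma \ref{claim:opt_action}, agents always act as if signals are informative, so conditional on $\omega = 1$ the model reduces to the classical sequential social learning environment with unbounded signals. Correct herding therefore applies: $\lim_t a_t = \theta$ almost surely under $\Prob_{1,\fg}$, as noted in Section \ref{sec:otherlearning}. On the event $\{\bar a = \fb\}$ one has $a_t = \fb$ for every $t$, hence $\lim_t a_t = \fb \neq \fg$, a contradiction; thus $\Prob_{1,\fg}[\bar a = \fb] = 0$.

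For part (ii), I would choose a prior $\pi$ sufficiently close to $1$ and control the running supremum of $X_t := 1 - \pi_t$. The public belief $\pi_t = \Prob_1[\theta = \fg \mid H_t]$ is a $\Prob_1$-martingale, and although it is not a $\Prob_{1,\fg}$-martingale, a short Bayes' rule calculation (using the Radon--Nikodym derivative $\mathbbm{1}_{\theta=\fg}/\pi_t$ of $\Prob_{1,\fg}$ with respect to $\Prob_1$ on $\mathcal{F}_t = \sigma(H_t)$) together with Jensen's inequality yields
\[
\E_{1,\fg}[\pi_{t+1} \mid H_t] \;=\; \frac{\E_1[\pi_{t+1}^2 \mid H_t]}{\pi_t} \;\geq\; \pi_t,
\]
so $X_t$ is a nonnegative $\Prob_{1,\fg}$-supermartingale with $X_1 = 1 - \pi$. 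The overturning principle (Lemma \ref{claim:overturning}) combined with the continuity of $F_\fg$ and $F_\fb$ shows that almost surely $a_t = \fg$ if and only if $\pi_{t+1} > 1/2$, hence $\{\sup_{t \geq 1} X_t < 1/2\} \subseteq \{\bar a = \fg\}$ (with equality when $\pi > 1/2$). Doob's maximal inequality then gives $\Prob_{1,\fg,\pi}[\sup_{t \geq 1} X_t \geq 1/2] \leq 2(1-\pi)$; for any $\pi > 1/2$ the complement has probability at least $2\pi - 1 > 0$, yielding $\Prob_{1,\fg,\pi}[\bar a = \fg] \geq 2\pi - 1 > 0$.

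The main obstacle is the change-of-measure step: the natural martingale $\pi_t$ lives under $\Prob_1$, whereas the event of interest is measured under $\Prob_{1,\fg}$. Once one recognizes that conditioning on the good news $\{\theta = \fg\}$ pushes $\pi_t$ upward (turning $X_t$ into a supermartingale), the remainder is a textbook application of Doob's inequality combined with the overturning principle. Notably, part (ii) does not use the tail assumptions on $F_\fb$; those enter only via the correct-herding statement invoked in part (i).
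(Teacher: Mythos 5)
Your part (i) argument is exactly the paper's: correct herding under $\Prob_{1,\fg}$ forces $\lim_t a_t = \fg$ a.s., which is disjoint from $\{\bar a = \fb\}$. Your part (ii) argument is correct but takes a genuinely different route. The paper conditions on the last random time $\tau$ at which the wrong action $\fb$ is chosen, uses the overturning principle to deduce $\pi_{\tau+1} \le 1/2$, and then invokes stationarity (Lemma \ref{claim:stationary}) to rewrite $\Prob_{1,\fg}[\tau = k]$ in terms of the immediate-agreement probability $\Prob_{1,\fg,\pi_{k+1}}[\bar a = \fg]$; if this were zero for every prior, the decomposition $1 = \sum_k \Prob_{1,\fg}[\tau = k]$ would collapse to zero, a contradiction. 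You instead observe that the public belief $\pi_t$ is a $\Prob_{1,\fg}$-submartingale (via the RND $Z_t = 2\pi_t$ on $\mathcal{F}_t$ and Jensen), so $X_t = 1 - \pi_t$ is a nonnegative $\Prob_{1,\fg}$-supermartingale, and Doob's maximal inequality gives the explicit bound $\Prob_{1,\fg,\pi}[\bar a = \fg] \ge 2\pi - 1$ whenever $\pi > 1/2$. Your approach is more quantitative and more self-contained (no stationarity or $\tau$ bookkeeping), while the paper's approach stays closer to the stationarity toolkit it reuses elsewhere. Two small imprecisions in your write-up, neither of which affects the result: the RND of $\Prob_{1,\fg}$ with respect to $\Prob_1$ restricted to $\mathcal{F}_t$ is $2\pi_t$ (your expression $\mathbbm{1}_{\theta=\fg}/\pi_t$ is not $\mathcal{F}_t$-measurable, though the resulting conditional-expectation formula you wrote is correct); and the parenthetical claim of set equality between $\{\sup_{t\ge 1} X_t < 1/2\}$ and $\{\bar a = \fg\}$ for $\pi>1/2$ is not quite right (the supremum could equal $1/2$ without ever being attained), but only the inclusion you actually use is needed. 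You are also right that neither your proof of (ii) nor the paper's uses the differentiable-left-tail assumption on $F_\fb$; that assumption enters only through Lemma \ref{lem:eventual_mon}.
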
 

Next, we focus on the immediate agreement event under an uninformative source. Building on Proposition \ref{thm:main}, we can now characterize asymptotic learning of informativeness in terms of this phenomenon. This is a crucial step in proving our main result, as conditioned on this event, the belief process is much easier to analyze. 

\begin{proposition}
\label{cor:immd_learn_uniform}
Asymptotic learning of informativeness holds if and only if conditioned on $\omega=0$, both events $\{\bar a= \fg\}$ and $\{\bar a= \fb\}$ occur with zero probability.
\end{proposition}
This proposition shows that learning is equivalent to the absence of immediate agreement starting from a uniform prior given an uninformative source. Given this, we are now ready to prove our main result.

\subsubsection*{Proof of Theorem \ref{thm:mainthm}}
By part (i) of Lemma \ref{lem:informative}, we know that $\Prob_{1,\fb}[\bar{a} = \fg] = 0$ and $\Prob_{1,\fg}[\bar{a} = \fb] = 0$. Moreover, conditioned on $\{\bar a = \fg\}$, the public LLR $r_t$ evolves deterministically according to \eqref{eq:highaction}, and we denote such sequence by $r^\fg_t$. Recall that agent $t$ chooses $a_t= \fg$ if $\ell \geq -r_t$ and $a_t = \fb$ otherwise. Consequently, the event $\{\bar{a}= \fg\}$ is equivalent to $\{\ell_t \geq -r^{\fg}_t, \forall t \geq 1\}$. Conditioned on $\omega=0$, since signals are i.i.d., so are the agents' private LLRs. Thus, we can write 
\begin{align*}
0 = \Prob_{1,\fb}[\bar{a} = \fg] = \prod_{t=1}^{\infty} (1 - F_{\fb}(-r^{\fg}_t)).
\end{align*}
By taking the logarithm on both sides, this infinite product equals zero if and only if the sum of its logarithms diverges: $-\sum_{t=1}^{\infty}\log (1 - F_{\fb}(-r^{\fg}_t)) = \infty$. For two sequences $(a_t)$ and $(b_t)$, we write $a_t \approx b_t$ if $\lim_{t \to \infty} (a_t/b_t) =1$. Since $r^{\fg}_t \to \infty$,  $\log (1 - F_{\fb}(-r^{\fg}_t))  \approx - F_\fb(-r^{\fg}_t)$ and the previous sum is infinite if and only if 
\begin{equation} \label{eq:proof_mainthem1}
     \sum_{t=1}^{\infty}F_\fb(-r^{\fg}_t) = \infty.
\end{equation}
Similarly, we have $\Prob_{1,\fg}[\bar{a} = \fb] = 0$ if and only if $ \sum_{t=1}^{\infty}(1-F_\fg(-r^{\fb}_t)) = \infty$, where $r_t^{\fb}$ denotes the deterministic process of $r_t$ conditioned on $\{\bar{a} =\fb\}$. By symmetry, $r_t^{\fb} = -r_t^{\fg}$ for all $t \geq 1$. Hence, $\Prob_{1,\fg}[\bar{a} = \fb] = 0 $ if and only if
\begin{equation} \label{eq:proof_mainthem11}
\sum_{t=1}^{\infty}(1-F_\fg(r^{\fg}_t)) = \infty,
\end{equation} 

Now, suppose the uninformative signals have fatter tails than the informative signals. By definition, there exists an $\eps > 0$ such that for all $x$ large enough, $F_0(-x) \geq \eps \cdot F_{\fb}(-x)$ and $1 - F_0(x) \geq \eps \cdot (1 - F_{\fg}(x))$. It then follows from \eqref{eq:proof_mainthem1} and \eqref{eq:proof_mainthem11} that 
\[
\sum_{t=1}^{\infty}F_0(-r^{\fg}_t) = \infty \quad \text{and} \quad \sum_{t=1}^{\infty}(1-F_0(r^{\fg}_t)) = \infty.
\] 
Applying the same logic we used to deduce
\eqref{eq:proof_mainthem1} and \eqref{eq:proof_mainthem11}, having these two divergent sums is equivalent to $\Prob_0[\bar{a} = \fg] =0$ and $\Prob_0[\bar{a} = \fb] =0$. 
Thus, by Proposition \ref{cor:immd_learn_uniform}, asymptotic learning of informativeness holds. 

By part (ii) of Lemma \ref{lem:informative}, there exist priors $\pi', \pi'' \in (0,1)$  such that $\Prob_{1,\fg, \pi'}[\bar{a} = \fg]  > 0$ and $\Prob_{1,\fb, \pi''}[\bar{a} = \fb] > 0$.  Let $r' = \log \frac{\pi'}{1-\pi'}$ and $r'' = \log \frac{\pi''}{1-\pi''}$. Denote by $r^\fg_t(r)$ the deterministic process with an initial value $r$.  Following a similar argument that led to \eqref{eq:proof_mainthem1} and \eqref{eq:proof_mainthem11}, these are equivalent to 
\[ \sum_{t=1}^{\infty} F_{\fg}(-r^{\fg}_t(r')) < \infty \quad \text{and} \quad  \sum_{t=1}^{\infty}(1 -F_{\fb}(r^{\fg}_t(r''))) < \infty.
 \]
Now, suppose the uninformative signals have thinner tails than the informative signals. By definition, there exists an $\eps >0$ such that either (i) $F_0(-x) \leq (1/\eps) \cdot F_{\fg}(-x)$ for all $x$ large enough, or (ii) $1-F_0(x) \leq (1/\eps) \cdot (1-F_{\fb}(x))$ for all $x$ large enough. It then follows from the above inequalities that  
\[ 
\text{either} \quad \sum_{t=1}^{\infty} F_{0}(-r^{\fg}_t(r')) < \infty, \quad \text{or} \quad \sum_{t=1}^{\infty}(1 -F_{0}(r^{\fg}_t(r''))) < \infty.
\] 
Following a similar argument, we have either (i) $\Prob_{0, \pi'}[\bar{a} = \fg] >0$ for some $\pi' \in (0,1)$ or (ii) $\Prob_{0, \pi''}[\bar{a} = \fb] >0$ for some $\pi'' \in (0,1)$. By a technical lemma (Lemma \ref{prop:pos_herd}) in the Appendix, if these hold for some prior, these also hold for all prior $\pi', \pi'' \in (0,1)$, including the uniform prior. Hence, either (i) $\Prob_{0}[\bar{a} = \fg] >0$, or (ii) $\Prob_{0}[\bar{a} = \fb] >0$. It follows from Proposition \ref{cor:immd_learn_uniform} that asymptotic learning of informativeness fails. This concludes the proof of Theorem \ref{thm:mainthm}.

\section{Discussion}\label{sec:discussion}
\paragraph{Other Learning Notions.} 
Our model highlights a crucial distinction between two common notions of learning: \emph{correct herding}, where agents' actions converge to the truth, and \emph{complete learning}, where their beliefs do. As is well-known, when the source is always informative and generates unbounded signals, both correct herding and complete learning hold. Our model with information uncertainty, however, decouples these two outcomes. When the source is informative, correct herding is still achieved, as agents' behavior mimics that of a standard social learning model. Yet, as we show in the Online Appendix, this no longer implies complete learning: even when agents settle on the correct action, they remain uncertain about its correctness if asymptotic learning of informativeness fails.


\paragraph{The Role of the Uniform Prior.}
The uniform prior is central to our analysis because it ensures that agents always respond to their private signals. In the Online Appendix, we prove that when such responsiveness fails\textemdash as is the case with a non-uniform prior\textemdash no signal structure can guarantee asymptotic learning of informativeness. Intuitively, this is because a non-uniform prior introduces a default action. Consequently, an observer who sees a consensus on this action cannot distinguish between two possibilities: (i) the source is informative and the default is the correct action, or (ii) the source is uninformative and agents, having learned this, have stopped responding to their signals and reverted to the default action.  This observational equivalence makes learning impossible.


\paragraph{Gaussian Signals with the Same Variance.} 
As shown by Corollary \ref{cor:gaussian_variance}, our main result applies directly to Gaussian private signals with different variances. When Gaussian signals share the same variance, this turns out to be a special case where the uninformative signals have neither fatter nor thinner tails. To complement our main result, in the Online Appendix, we show that in this case, asymptotic learning of informativeness occurs if and only if the uninformative signals are also symmetric around zero. The intuition here is similar to that used in establishing Proposition \ref{thm:main}: any deviation of the uninformative mean from zero shifts its distribution closer to one of the informative distributions, leading agents to form an action consensus, just as they would under an informative source. This in turn creates an observational equivalence that prevents learning.

\newpage
\appendix

\section{Proofs Omitted from the Main Text}
\numberwithin{equation}{section}

\setcounter{equation}{0}

\begin{proof}[Proof of Lemma \ref{claim:opt_action}]
From \eqref{eq:optimal_action_gen}, we know $a_t = \fg$ if  
\begin{align*}
    & 1 \leq \frac{\Prob[\theta = \fg|H_t, s_t]}{\Prob[\theta = \fb|H_t, s_t]} = \frac{\sum_{\Tilde{\omega} \in \{0, 1\}}\Prob_{\Tilde{\omega}}[\theta = \fg|H_t, s_t] \cdot \Prob[\omega=\Tilde{\omega}|H_t, s_t]}{\sum_{\Tilde{\omega} \in \{0, 1\}}\Prob_{\Tilde{\omega}}[\theta = \fb|H_t, s_t] \cdot \Prob[\omega=\Tilde{\omega}|H_t, s_t]},
\end{align*}
where the second equality follows from the law of total probability. Since conditioned on $\omega=0$, neither the history $H_t$ nor the signal $s_t$ contains any information about $\theta$, we have $\Prob_0[\theta= \fg|H_t, s_t] = \Prob_0[\theta = \fg]$ and $\Prob_0[\theta= \fb|H_t, s_t] =\Prob_0[\theta = \fb]$. As long as 
conditioned on $\omega=0$, the prior on $\theta$ is uniform,  $\Prob_0[\theta= \fg|H_t, s_t] = \Prob_0[\theta= \fb|H_t, s_t] =1/2$. Thus, the above inequality is equivalent to $$1 \leq \frac{\Prob_1[\theta = \fg|H_t, s_t]}{\Prob_1[\theta = \fb|H_t, s_t]}.$$ 
That is, in equilibrium, each agent chooses the most likely action conditioned on the available information and the source being informative. 
\end{proof}

The following simple claim will be useful in proving Proposition \ref{thm:main}. 
Recall that $q_t = \Prob[\omega =1|H_t]$ is the belief that an outside observer assigns to an informative source based on the history of actions from time 1 to $t-1$ and $\gamma = \Prob[\omega=1] \in (0,1)$ is the prior that the source is informative.
\begin{claim} \label{claim:ineq}
For any $a \in (0, 1/2)$ and any $b\in (1/2, 1)$, 
   \begin{align*}
    \Prob_0[q_t = \Tilde{q}] &\leq \frac{1-a}{a} \frac{\gamma}{1-\gamma} \cdot 
 \Prob_1[q_t = \Tilde{q}], \text{~for all~} \Tilde{q} \in [a, 1/2];\\
       \Prob_0[q_t = \Tilde{q}] &\geq \frac{1-b}{b} \frac{\gamma}{1-\gamma}\cdot \Prob_1[q_t = \Tilde{q}], \text{~for all~} \Tilde{q} \in [1/2, b].
   \end{align*} 
\end{claim}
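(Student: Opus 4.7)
The plan is to reduce both inequalities to a single exact identity, obtained via Bayes' rule, together with the elementary monotonicity of the function $\tilde{q}\mapsto (1-\tilde{q})/\tilde{q}$ on $(0,1)$. The starting observation is that $q_t$ is a function of the history $H_t=(a_1,\ldots,a_{t-1})\in\{\fg,\fb\}^{t-1}$, which takes finitely many values, so $\{q_t=\tilde q\}$ is a disjoint union of history-cylinders $\{H_t=h\}$ and no measure-theoretic subtlety arises.

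First I would use the uniform prior on $\omega$ together with Bayes' rule to write, for any history $h$ with $\Prob[H_t=h]>0$,
\[
q_t(h)=\Prob[\omega=1\mid H_t=h]=\frac{\Prob_1[H_t=h]}{\Prob_0[H_t=h]+\Prob_1[H_t=h]},
\]
which is equivalent to the likelihood-ratio identity
\[
\frac{q_t(h)}{1-q_t(h)}=\frac{\Prob_1[H_t=h]}{\Prob_0[H_t=h]}.
\]
Hence, on the event $\{q_t=\tilde q\}$, for every history $h$ in this event one has $\Prob_1[H_t=h]=\tfrac{\tilde q}{1-\tilde q}\,\Prob_0[H_t=h]$. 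Summing over those finitely many histories yields the key identity
\[
\Prob_1[q_t=\tilde q]=\frac{\tilde q}{1-\tilde q}\,\Prob_0[q_t=\tilde q],
\qquad\text{equivalently,}\qquad
\Prob_0[q_t=\tilde q]=\frac{1-\tilde q}{\tilde q}\,\Prob_1[q_t=\tilde q].
\]

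Finally I would invoke the monotonicity of $\tilde q\mapsto (1-\tilde q)/\tilde q$, which is strictly decreasing on $(0,1)$. For $\tilde q\in[a,1/2]$ this function attains its maximum $(1-a)/a$ at $\tilde q=a$, giving the first inequality; for $\tilde q\in[1/2,b]$ it attains its minimum $(1-b)/b$ at $\tilde q=b$, giving the second. No real obstacle is anticipated: the argument is a two-line application of Bayes' rule and a monotonicity bound, and the only thing to be mindful of is that the uniform prior $\Prob[\omega=1]=1/2$ is what makes the likelihood ratio equal to $q_t/(1-q_t)$ without any Bayes-factor correction.
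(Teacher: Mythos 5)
Your proof is correct and is essentially the same argument as the paper's: both derive the exact likelihood-ratio identity $\Prob_0[q_t=\tilde q]=\frac{1-\tilde q}{\tilde q}\,\Prob_1[q_t=\tilde q]$ via Bayes' rule (using the uniform prior on $\omega$) and then conclude by monotonicity of $\tilde q\mapsto(1-\tilde q)/\tilde q$. The only cosmetic difference is that you decompose $\{q_t=\tilde q\}$ into history-cylinders and sum, whereas the paper gets the same identity in one step by conditioning on $\{q_t=\tilde q\}$ directly via the law of iterated expectations (their "no introspection" observation $\Prob[\omega=1\mid q_t=\tilde q]=\tilde q$).
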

\begin{proof}
Fix a prior $\gamma\in (0,1)$. For any $\tilde{q} \in (0,1)$, let $\Tilde{H}_t$ be a history of actions such that the associated belief $q_t$ is equal to $\Tilde{q}$. By the law of total expectation, 
\begin{align*}
\Prob[\omega=1|q_t = \Tilde{q}] 
& = \E[\E[\mathbbm{1}(\omega=1)| \Tilde{H}_t] | q_t = \Tilde{q}]  = \E[\Tilde{q} |q_t = \Tilde{q}] = \Tilde{q}.
\end{align*}
It follows from Bayes' rule that 
    \begin{align*}
        \frac{\Prob_0[q_t = \Tilde{q}]}{\Prob_1[q_t = \Tilde{q}]} =  \frac{\Prob[\omega=0|q_t = \Tilde{q}]}{\Prob[\omega=1|q_t = \Tilde{q}]} \cdot \frac{\Prob[\omega=1]}{\Prob[\omega=0]} = \frac{1-\Tilde{q}}{\Tilde{q}} \frac{\gamma}{1-\gamma}.
    \end{align*}
Since for any $a \in (0, 1/2)$ and any $\Tilde{q} \in [a, 1/2]$, $1 \leq \frac{1-\Tilde{q}}{\Tilde{q}} \leq \frac{1-a}{a}$, it thus follows from the above equation that $$\Prob_0[q_t = \Tilde{q}] = \frac{1-\Tilde{q}}{\Tilde{q}} \frac{\gamma}{1-\gamma} \cdot \Prob_1[q_t = \Tilde{q}] \leq \frac{1-a}{a} \frac{\gamma}{1-\gamma} \cdot  \Prob_1[q_t = \Tilde{q}].$$ The second inequality follows from an identical argument.
\end{proof}


\begin{proof}[Proof of Proposition \ref{thm:main}]
\emph{(If direction)} 
Suppose $\Prob_{0}[S = \infty]= 1$. Let  $a^x_t$ be the guess that the outside observer $x$ would make to maximize his probability of guessing $\omega$ correctly at time $t$. The information available to $x$ at time $t$ is $H_t = (a_1, \ldots, a_{t-1})$ and at time infinity is $H_{\infty} = \cup_t H_t$. Denote by $q_{\infty} = \Prob[\omega=1|H_{\infty}]$ the belief that $x$ has at time infinity that $\omega=1$.

Fix a large positive integer $k \in \mathbbm{N}$. Let $A_t(k)$ denote the event that there have been at least $k$ action switches before time $t$ and denote its complementary event by $A^c_t(k)$. Consider the following strategy $\Tilde{a}^x_\infty(k)$ for $x$ at time infinity: $\Tilde{a}^x_\infty(k) = 0$ if $A_\infty(k)$ occurs and $\Tilde{a}^x_\infty(k) =  1$ otherwise. 
The probability of $x$ guessing it correctly under this strategy is
\begin{align} \label{eq:x_strategy1}
    \Prob[\Tilde{a}^x_\infty(k) = \omega]  
     = \Prob_0[A_{\infty}(k)] \cdot \Prob[\omega=0] + \Prob_1[A^c_{\infty}(k)] \cdot \Prob[\omega=1].
\end{align}
Since $\Prob_{1, \theta}[a_t \to \theta]=1$, it follows that for all $k$ large enough, 
\begin{equation} \label{eq:smith}
   \Prob_1[A^c_{\infty}(k)] = 1.
\end{equation} 
By assumption, $\Prob_{0}[S = \infty]= 1$, and thus $\Prob_{0}[A_\infty(k)] =1$ for all $k$. Together, \eqref{eq:x_strategy1} and \eqref{eq:smith} imply that for all $k$ large enough, $ \Prob[\Tilde{a}^x_\infty(k) =\omega] =1$.  

Meanwhile, the optimal strategy for $x$ at time infinity is the following:  $a^x_\infty = 1$ if $q_\infty \geq 1/2$ and $a^x_\infty =0$ otherwise. 
Since $\Prob[\tilde a_\infty^x(k) = \omega] =1$ for all $k$ large enough, the probability of guessing it correctly under $a^x_\infty$ must also be one:   
\begin{align} \label{eq:opt_outsider}
  1 =  \Prob[a^x_{\infty} =\omega] = \Prob_1[q_\infty \geq 1/2]\cdot \Prob[\omega=1] + \Prob_0[q_{\infty} < 1/2] \cdot \Prob[\omega=0].
\end{align}
From \eqref{eq:opt_outsider},  $\Prob_0[q_{\infty} < 1/2] = \Prob_1[q_{\infty} \geq 1/2] =1$. To establish asymptotic learning of informativeness, by definition, we need to show $\Prob_0[q_{\infty} =0] =1$ and $\Prob_1[q_{\infty} =1] =1$. To this end, note that $\Prob_0[q_{\infty} < 1/2] =1$ implies that
$\Prob_0[q_{\infty} \geq  1/2] = 0$. Thus, by Claim \ref{claim:ineq}, it implies that for any $b \in (1/2, 1)$ and all $\Tilde{q} \in [1/2, b]$,
$\Prob_1[q_{\infty} = \Tilde{q}] = 0$. Consequently, it follows from  $\Prob_1[q_{\infty} \geq 1/2] =1$ that $\Prob_{1}[q_\infty =1]=1$. The case that $\Prob_0[q_{\infty} =0] =1$ follows from an identical argument. 


\emph{(Only-if direction)} Suppose by contraposition that $\Prob_{0}[S < \infty]>0$. Again, since $\Prob_{1, \theta}[a_t \to \theta] =1$, we have $\Prob_{1}[S < \infty] =1$. Thus, there exists a history of actions at time infinity $\Tilde{H}_\infty$  that is possible under both probability measures $\Prob_0$ and $\Prob_1$: $\Prob_0[\Tilde{H}_\infty] >0$ and $\Prob_1[\Tilde{H}_\infty] >0$. By Bayes' rule,  $\Prob[\omega=1|\Tilde{H}_{\infty}] <1$ and $\Prob[\omega=0|\Tilde{H}_{\infty}]<1$. Assume without loss of generality that under  $\Tilde{H}_\infty$, the corresponding belief $\Tilde{q}_{\infty} \geq 1/2$, and so the observer $x$ would guess $a^x_{\infty} = 1$. Hence, the probability of $x$ guessing correctly about $\omega$ is strictly less than one: 
\begin{align} 
    \Prob[a^x_{\infty} =\omega] 
    & = \Prob[a^x_{\infty} =\omega, \Tilde{H}_{\infty}] +  \Prob[a^x_{\infty} =\omega, \Tilde{H}^c_{\infty}] \nonumber \\
    & = \Prob[\omega= 1|\Tilde{H}_{\infty}] \cdot \Prob[\Tilde{H}_{\infty}] + \Prob[a^x_{\infty}= \omega,\Tilde{H}^c_{\infty}]  \nonumber < \Prob[\Tilde{H}_{\infty}] + \Prob[\Tilde{H}^c_{\infty}]= 1. \nonumber
\end{align}
This contradicts $\Prob[a^x_\infty = \omega] = 1$, which is required for asymptotic learning of informativeness to hold. 
\end{proof}

Recall that $\Prob_{\omega, \theta, \pi}$ is the conditional probability distribution given $\omega$ and $\theta$ while emphasizing the prior value $\pi$.  

\begin{proof}[Proof of Lemma \ref{lem:informative}] 
Since $\Prob_{1, \fg}[a_t \to \fg] = 1$, part (i) follows directly from the fact that $\{\bar{a} = \fb\}$ and $\{a_t \to \fg\}$ are two disjoint events. For part (ii), let $\tau < \infty$ denote the last (random) time at which the agent chooses the wrong action $\fb$. It is well-defined since $a_t \to \fg$. Hence, $1 = \Prob_{1,\fg}[a_t \to \fg] = \sum_{k = 1}^{\infty} \Prob_{1,\fg}[\tau=k]$. By the overturning principle, $a_{\tau} = \fb$ implies that $\pi_{\tau +1} \leq 1/2$. Hence, 
\begin{align*}
1=\sum_{k = 1}^{\infty} \Prob_{1,\fg}[\tau=k] 
      & = \sum_{k = 1}^{\infty} \E_{1,\fg}\big[\Prob_{1,\fg}[a_{k+1} =  a_{k+2} = \ldots = \fg, \pi_{k+1}\leq 1/2~|~\pi_{k+1}]\big]  \\
     & = \sum_{k = 1}^{\infty} \E_{1,\fg}\big[ \mathbbm{1}(\pi_{k+1}\leq 1/2) \cdot \Prob_{1,\fg}[a_{k+1} =  a_{k+2} = \ldots = \fg~|~ \pi_{k+1}]\big] \\
     & = \sum_{k = 1}^{\infty} \E_{1,\fg}\big[ \mathbbm{1}(\pi_{k+1}\leq 1/2) \cdot \Prob_{1,\fg, \pi_{k+1}}[\bar{a} = \fg]\big],
\end{align*}
where the second equality follows from the law of total expectation, and the last equality follows from the stationarity property. Suppose $\Prob_{1,\fg, \pi}[\bar{a} = \fg] = 0$ for all prior $\pi \in (0,1)$. This implies that the above equation equals zero, a contradiction.  
\end{proof}

To prove Proposition  \ref{cor:immd_learn_uniform}, we first prove the following proposition (Proposition \ref{prop:learning_immediate_herd}). Together with Proposition \ref{thm:main}, they jointly imply Proposition \ref{cor:immd_learn_uniform}. 

\begin{proposition} \label{prop:learning_immediate_herd}
The following are equivalent. 
\begin{compactenum}[(i)]
\item For any action $a \in \{\fg, \fb\}$, $\Prob_{0, \pi}[\bar{a}= a]=0$ for all prior $\pi \in (0,1)$. 
\item $\Prob_0[S = \infty] = 1$.
\item For any action $a \in \{\fg, \fb\}$, $\Prob_{0, \pi}[\bar{a} = a] =0$ for some prior $\pi\in (0,1)$. 
\end{compactenum}
\end{proposition}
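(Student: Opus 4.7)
My plan is to establish the chain $(i) \Rightarrow (ii) \Rightarrow (iii) \Rightarrow (i)$. The middle implication is immediate: taking $\pi = 1/2$ gives $\Prob_0[\bar a = a] \leq \Prob_0[S < \infty] = 0$. For $(i) \Rightarrow (ii)$, I would write
\[
\{S < \infty\} = \bigcup_{t \geq 1,\, a \in \{\fg, \fb\}}\{a_t = a_{t+1} = \cdots = a\},
\]
condition on the public belief $\pi_t$ and invoke stationarity (Lemma \ref{claim:stationary}) together with (i) to obtain $\Prob_0[a_t = a_{t+1} = \cdots = a \mid \pi_t] = \Prob_{0, \pi_t}[\bar a = a] = 0$. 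Taking expectations and a countable union over $t$ and $a$ then yields $\Prob_0[S = \infty] = 1$.

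The main work is $(iii) \Rightarrow (i)$. By the left--right symmetry of the model it suffices to treat $a = \fg$. Assume $\Prob_{0, \pi_0}[\bar a = \fg] = 0$ for some $\pi_0 \in (0,1)$; I will show $\Prob_{0, \pi}[\bar a = \fg] = 0$ for every $\pi$. Since signals are i.i.d.\ under $\Prob_0$ and the public log-likelihood ratio under $\bar a = \fg$ evolves deterministically via $\phi_\fg(r) := r + D_\fg(r)$,
\[
\Prob_{0, \pi}[\bar a = \fg] = \prod_{t \geq 1}\bigl(1 - F_0(-r_t^{\fg})\bigr),
\]
where $r_t^{\fg}$ denotes the orbit of $\phi_\fg$ started at $r(\pi)$. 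The standard approximation in \eqref{eq:sum_infinite1} identifies vanishing of this product with divergence of $\sum_t F_0(-r_t^{\fg})$, so the task reduces to showing that the convergence type of $\sum_t F_0(-r_t^{\fg})$ does not depend on the initial point.

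For this I would invoke Lemma \ref{lem:eventual_mon} to conclude that $\phi_\fg$ is strictly increasing on some half-line $[R, \infty)$. Since $D_\fg > 0$ everywhere, every orbit is strictly increasing and diverges to $+\infty$, and hence eventually enters $[R, \infty)$. For any two initial points, pick $T$ large enough that both orbits lie in $[R, \infty)$ at time $T$, and (relabelling if needed) call the two values $\alpha \leq \beta$. Monotonicity gives $\phi_\fg^s(\alpha) \leq \phi_\fg^s(\beta)$ for all $s$, whence $\sum_s F_0(-\phi_\fg^s(\beta)) \leq \sum_s F_0(-\phi_\fg^s(\alpha))$. Conversely, pick $K$ with $\phi_\fg^K(\alpha) \geq \beta$; monotonicity then yields $\phi_\fg^{K+s}(\alpha) \geq \phi_\fg^s(\beta)$, so $\sum_{t \geq K} F_0(-\phi_\fg^t(\alpha)) \leq \sum_s F_0(-\phi_\fg^s(\beta))$. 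Modulo finite head contributions, the two tail sums converge or diverge together, so the hypothesis at $\pi_0$ transfers to every $\pi$, establishing (i).

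I expect the main obstacle to lie in Lemma \ref{lem:eventual_mon} itself --- verifying that the update function $\phi_\fg$ is eventually monotone increasing in $r$. This is where the paper's mild assumption that $f_\fb(-x) < 1$ for all $x$ large enough (and its symmetric counterpart for $f_\fg$) is invoked: it controls the hazard-rate term appearing in $\phi_\fg'(r) = 1 + \tfrac{f_\fg(-r)}{1-F_\fg(-r)} - \tfrac{f_\fb(-r)}{1-F_\fb(-r)}$ and prevents the map from becoming contractive in the right tail. Everything else in the argument uses only i.i.d.\ signals under $\Prob_0$, stationarity, and the tail-sum comparison just sketched.
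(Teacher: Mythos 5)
Your proof is correct and follows essentially the same route as the paper: the chain $(i)\Rightarrow(ii)\Rightarrow(iii)\Rightarrow(i)$, with the middle implication immediate, $(i)\Rightarrow(ii)$ via stationarity, and $(iii)\Rightarrow(i)$ reduced to a tail-sum comparison of the deterministic orbit $r_t^{\fg}$ using eventual monotonicity of $\phi_\fg$ (Lemma~\ref{lem:eventual_mon}) and divergence of the orbit (Lemma~\ref{lemma:rg}). The only cosmetic differences are that the paper packages your $(iii)\Rightarrow(i)$ comparison argument into a standalone lemma (Lemma~\ref{prop:pos_herd}) and proves $(i)\Rightarrow(ii)$ by contraposition rather than by a direct union bound.
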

We will first prove the following technical lemma. 

\begin{lemma}[Eventual Monotonicity]\label{lem:eventual_mon}
Suppose that $F_{\fb}$ has a differentiable left tail and its density function $f_{\fb}$ satisfies the condition that, for all $x$ large enough, $f_\fb(-x) < 1$. Then, $\phi(x):= x + D_\fg(x)$ increases monotonically for all $x$ large enough.
\end{lemma}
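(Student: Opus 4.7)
The plan is to show $\phi'(x) > 0$ for all sufficiently large $x$. Using the symmetry $F_\fg(x) + F_\fb(-x) = 1$ (equivalently $1 - F_\fg(-x) = F_\fb(x)$), I first rewrite the update term as $D_\fg(x) = \log F_\fb(x) - \log(1 - F_\fb(-x))$. Under the stated differentiability hypothesis (extended to both tails via the likelihood-ratio identity below), $\phi$ is differentiable for $x$ large, and
\[
\phi'(x) = 1 + \frac{f_\fb(x)}{F_\fb(x)} - \frac{f_\fb(-x)}{1 - F_\fb(-x)}.
\]
The task reduces to controlling the two hazard-like ratios as $x \to \infty$.

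The key technical input is a likelihood-ratio identity. Since $\ell_t = \log(d\mu_\fg/d\mu_\fb)(s_t)$ by construction, the Radon--Nikodym derivative of the law of $\ell$ under $\mu_\fg$ relative to its law under $\mu_\fb$ is $e^{\ell}$, giving $f_\fg(y) = e^y f_\fb(y)$ wherever $f_\fb$ is well-defined. Combining this with the symmetry $f_\fg(-y) = f_\fb(y)$ (obtained by differentiating $F_\fg(x)+F_\fb(-x)=1$) yields the functional equation $f_\fb(x) = e^{-x} f_\fb(-x)$. Together with the hypothesis $f_\fb(-x) < 1$ in the left tail, this forces $f_\fb(x) \leq e^{-x}$ in the right tail, so the right tail of $f_\fb$ decays exponentially. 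Since $F_\fb(x) \to 1$, the middle summand $f_\fb(x)/F_\fb(x)$ is $O(e^{-x})$ and thus negligible.

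For the remaining summand, I use that $F_\fb(-x) \to 0$ as $x \to \infty$, so $1 - F_\fb(-x) \to 1$. Reading the tail hypothesis in its intended quantitative form --- $f_\fb(-x)$ eventually bounded away from $1$, which is automatic whenever $f_\fb$ tends to zero at infinity, as the authors remark --- I pick $\eta > 0$ and $X$ with $f_\fb(-x) \leq 1 - 2\eta$ and $1 - F_\fb(-x) \geq 1 - \eta$ for all $x \geq X$. Then
\[
\frac{f_\fb(-x)}{1 - F_\fb(-x)} \leq \frac{1-2\eta}{1-\eta} < 1,
\]
which, combined with the previous paragraph, delivers $\phi'(x) > 0$ on $[X,\infty)$ and hence the desired monotonicity. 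The main obstacle is exactly this last step: the pointwise inequality $f_\fb(-x) < 1$ is formally too weak on its own to yield strict positivity, so the proof passes through the uniform tail-vanishing interpretation, which covers every standard example (Gaussian, exponential families, and the like).
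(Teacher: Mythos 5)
Your proof is correct and conceptually aligned with the paper's: both proofs differentiate $D_\fg$, invoke the likelihood-ratio identity $f_\fg(\ell)=\ee^\ell f_\fb(\ell)$, and interpret the hypothesis $f_\fb(-x)<1$ as a uniform bound $f_\fb(-x)\leq \rho<1$ to control the term $f_\fb(-x)/(1-F_\fb(-x))$. The packaging differs in two respects. First, you use the symmetry $1-F_\fg(-x)=F_\fb(x)$ to express everything through $F_\fb$, which forces you to establish that $f_\fb$ exists on the right tail as well; you handle this correctly via the identities, but the paper avoids the issue entirely by keeping the two hazard-like terms in the form $f_\fg(-x)/(1-F_\fg(-x))$ and $f_\fb(-x)/(1-F_\fb(-x))$, which involve only the (assumed-differentiable) left tails. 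Second, you bound $\phi'(x)$ pointwise, whereas the paper bounds $-D_\fg'$ and then integrates to get $D_\fg(x)+x\leq D_\fg(x')+x'$; the conclusions are interchangeable. You explicitly flag that the pointwise hypothesis ``$f_\fb(-x)<1$'' must be upgraded to a uniform bound --- this is a genuine gap that the paper glosses over in the same way (writing ``we can find a constant $\rho<1$''), so your candor here is a point in your favor rather than against.
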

\begin{proof}
By assumption, we can find a constant $\rho <1$ such that for all $x$ large enough, $f_{\fb}(-x) \leq \rho$. By definition, $D_g(x) = \log \frac{1-F_{\fg}(-x)}{1-F_{\fb}(-x)}$. Taking the derivative of $D_{\fg}$, 
\[
D_{\fg}'(x) = \frac{f_{\fg}(-x)}{1-F_{\fg}(-x)} - \frac{f_{\fb}(-x)}{1-F_{\fb}(-x)}. 
\]
Observe that the LLR of $\ell$ is the LLR itself (see, e.g., \cite{chamley2004rational}): $ \log \frac{dF_{\fg}}{dF_{\fb}}(x) = x$. It follows that
\begin{align*}
    -D'_\fg(x) = f_{\fb}(-x) \Big( \frac{1}{1-F_{\fb}(-x)} - \frac{\ee^{-x}}{1-F_{\fg}(-x)} \Big)  \leq   \frac{f_{\fb}(-x)}{1-F_{\fb}(-x)}.
\end{align*}
Fix some $\eps>0$ small enough so that $(1+\eps) \rho \leq 1$. It follows from the above inequality that there exists some $x$ large enough such that $-D'_\fg(x) \leq (1+\eps) f_{\fb}(-x)$. Furthermore, for all $x' \geq x$,
\begin{align*}
         D_{\fg}(x) &=  D_{\fg}(x') - \int_x^{x'} D'_{\fg}(y) dy\\
         & \leq  D_{\fg}(x') + (1+\eps) \int_x^{x'} f_{\fb}(-x) dx\\
         & = D_{\fg}(x') - (1+\eps) (F_{\fb}(-x')- F_{\fb}(-x)).
    \end{align*}
Rearranging the above equation, 
    \begin{align*}
        D_{\fg}(x) - D_{\fg}(x') &\leq (1+\eps)(F_{\fb}(-x) - F_{\fb}(-x'))\\
        & \leq (1+\eps)\rho  (x'-x),
    \end{align*}
where the second last inequality follows from the fact that $f_{\fb}(-x) \leq \rho < 1$. Since $(1+\eps)\rho \leq 1$, the above inequality implies that there exists some $x$ large enough such that $D_{\fg}(x)  + x \leq D_{\fg}(x')+ x'$ for all $ x' \geq x$. That is, $\phi(x)$ eventually increases monotonically. 
\end{proof}
Applying the above lemma, we show that conditioned on an uninformative source, the possibility of immediate agreement is independent of the prior belief. 

Recall that $r^{\fg}_t$ is the sequence that evolves according to  \eqref{eq:highaction}. Since $\phi(x) = x + D_\fg(x)$, we can write $r^\fg_{t+1} = \phi(r^\fg_t)$ for all $t\geq 1$. Furthermore, we write $r^\fg_t(r) = \phi^{t-1}(r)$ for all $t \geq 1$, where $\phi^t$ is its $t$-th composition and $\phi^0(r)=r$. 

\begin{lemma}\label{prop:pos_herd}
For any action $a\in \{\fg, \fb\}$, the following statements are equivalent:
    \begin{compactenum}[(i)]
        \item $ \Prob_{0, \pi}[\Bar{a} = a] >0$, for some prior $\pi \in (0,1)$;
        \item $ \Prob_{0, \pi}[\Bar{a} = a] >0$, for all prior $\pi \in (0,1)$.
    \end{compactenum}
\end{lemma}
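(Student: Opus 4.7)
The plan is to write $\Prob_{0,\pi}[\bar a = \fg]$ as an explicit infinite product along the deterministic public log-likelihood trajectory, and then to show that positivity of this product is a tail property that does not depend on the starting prior. The statement for $\bar a = \fb$ follows from an analogous argument using the recursion $r_{t+1}^\fb = r_t^\fb + D_\fb(r_t^\fb)$, so I focus on $\bar a = \fg$. Conditional on $\omega = 0$, the private log-likelihood ratios $\ell_t$ are i.i.d.\ with CDF $F_0$, and on $\{\bar a = \fg\}$ the public log-likelihood ratio evolves deterministically as $r_{t+1}^\fg = \phi(r_t^\fg)$ with $\phi(r) = r + D_\fg(r)$ and $r_1 = \log(\pi/(1-\pi))$. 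Since agent $t$ plays $\fg$ iff $\ell_t \geq -r_t^\fg$, independence of the $\ell_t$ gives
$$
P(r_1) \;:=\; \Prob_{0,\pi}[\bar a = \fg] \;=\; \prod_{t=1}^{\infty}\bigl(1 - F_0(-\phi^{t-1}(r_1))\bigr).
$$
Unboundedness of the signals together with mutual absolute continuity of $\mu_0, \mu_\fg, \mu_\fb$ forces $F_0(-x) \in (0,1)$ for every $x \in \R$, so each factor is strictly positive and $P(r_1) > 0$ is equivalent to $\sum_{t \geq 1} F_0(-\phi^{t-1}(r_1)) < \infty$.

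The next step is a shift identity: peeling off the first factor gives $P(r_1) = (1 - F_0(-r_1))\,P(\phi(r_1))$, so inductively $P(r_1) > 0 \iff P(\phi^n(r_1)) > 0$ for every $n \geq 0$. Combining this with Lemma \ref{lemma:rg}(i) (iterates of $\phi$ tend to infinity from any starting point) and Lemma \ref{lem:eventual_mon} (there is a threshold $M$ beyond which $\phi$ is monotonically increasing), I can for any two starting log-odds $r_1, r_1' \in \R$ choose a common $n$ such that both $\phi^n(r_1)$ and $\phi^n(r_1')$ lie in $[M, \infty)$. The problem then reduces to showing $P(a) > 0 \iff P(b) > 0$ for arbitrary $a, b \in [M, \infty)$. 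Taking $a \leq b$ without loss of generality, induction on monotonicity of $\phi|_{[M,\infty)}$ gives $\phi^t(a) \leq \phi^t(b)$ for all $t \geq 0$, and Lemma \ref{lemma:rg}(i) supplies $N$ with $\phi^N(a) \geq b$, which by monotonicity further yields $\phi^{N+t}(a) \geq \phi^t(b)$. Sandwiching the corresponding series,
$$
\sum_{t \geq 0} F_0(-\phi^{N+t}(a)) \;\leq\; \sum_{t \geq 0} F_0(-\phi^t(b)) \;\leq\; \sum_{t \geq 0} F_0(-\phi^t(a)),
$$
shows that all three converge or all three diverge, since the leftmost is merely a tail of the rightmost. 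Hence $P(a) > 0 \iff P(b) > 0$, and the proof is complete.

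The main obstacle I expect is the lack of global monotonicity of $\phi$: trajectories starting at different points below $M$ may cross one another in the non-monotonic region, which blocks any direct pointwise comparison of $\phi^{t-1}(r_1)$ with $\phi^{t-1}(r_1')$. The shift identity is precisely the device that circumvents this, as it lets me push both starting points forward into $[M, \infty)$, where the monotonicity furnished by Lemma \ref{lem:eventual_mon} delivers the two-sided sandwich driving the tail comparison.
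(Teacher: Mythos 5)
Your proof is correct and reaches the same conclusion through a closely related but not identical argument. Both your version and the paper's begin from the infinite-product representation of $\Prob_{0,\pi}[\bar a = \fg]$, reduce positivity to convergence of $\sum_t F_0(-\phi^{t-1}(r_1))$, and rely on eventual monotonicity of $\phi$ (Lemma \ref{lem:eventual_mon}) to compare trajectories. Where you differ is in how you get two arbitrary starting log-odds into comparable position. The paper invokes the overturning principle (Lemma \ref{claim:overturning}) to restrict to $r_1 \geq 0$ and then uses Lemma \ref{lemma:rg}(ii), which supplies a time $t_0$ that is \emph{uniform} over all nonnegative starting points beyond which every trajectory has cleared a fixed threshold; a one-sided comparison then transfers finiteness of the series from the given $\tilde r$ to any $r \geq 0$. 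You instead derive the shift identity $P(r_1)=(1-F_0(-r_1))\,P(\phi(r_1))$, which exhibits positivity of the immediate-agreement probability as a tail property and lets you push each starting point into the monotone region $[M,\infty)$ separately using only Lemma \ref{lemma:rg}(i), thereby bypassing both the overturning principle and the uniformity statement in Lemma \ref{lemma:rg}(ii); you then close with a two-sided sandwich that gives the symmetric equivalence $P(a)>0 \iff P(b)>0$ directly. The net effect is a slightly leaner dependency graph. One small point worth making explicit in your write-up: since $D_\fg>0$ everywhere, $\phi(x)>x$ for all $x$, so once $\phi^{n}(r_1)\geq M$ the trajectory never falls back below $M$ — this is what guarantees the existence of your "common $n$" placing both iterates in $[M,\infty)$ simultaneously.
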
 
\begin{proof}
    The implication from $(ii) \Rightarrow (i)$ is immediate. We will show $(i) \Rightarrow (ii)$. Fix some prior $\Tilde{\pi} \in (0,1)$ such that $\Prob_{0,\Tilde{\pi}}[\Bar{a}=\fg] >0$ and let $\Tilde{r} = \log \frac{\Tilde{\pi}}{1-\Tilde{\pi}}$. Since $r^{\fg}_t(\Tilde{r})$ is a deterministic process with an initial value $\tilde r$, the event $\{\bar{a}= \fg\}$ initiated at prior $\Tilde{\pi}$ is equivalent to $\{\ell_t \geq -r^{\fg}_t(\Tilde{r}), \forall t \geq 1\}$. Following a similar argument that led to \eqref{eq:proof_mainthem1}, 
    \begin{align} \label{eq:finitesum}
        \Prob_{0,\Tilde{\pi}}[\Bar{a}=\fg] >0 \Leftrightarrow  \sum_{t=1}^{\infty}  F_{0}(- r^{\fg}_t(\Tilde{r})) < \infty.
    \end{align}
By the overturning principle, it suffices to show the above implies that 
\[
 \sum_{t=1}^{\infty}  F_{0}(- r^{\fg}_t(r)) <\infty, \quad \text{for any~} r\geq 0. 
\]
Since $r^\fg_t(\Tilde{r}) \to \infty$, it follows from Lemma \ref{lem:eventual_mon} that there exists $\bar{t}$ large enough such that $r^{\fg}_{\bar{t}}(\Tilde{r}) := \bar{r} \geq 0$ and $\phi(r) \geq \phi(\bar{r})$ for all $r \geq \bar{r}$. Note that there exists $t_0 \in \mathbbm{N}$ such that $r^{\fg}_{t_0}(r) \geq \bar{r}$ for all $r\geq 0$.\footnote{See Lemma 12 in \cite{rosenberg2019efficiency}.} Since above $\bar{r}$, $\phi$ is monotonically increasing, we have that $\phi(r^{\fg}_{t_0}(r)) \geq \phi(\bar{r})$ for any $r \geq 0$. Hence, for all $\tau\geq 1$, $r^{\fg}_{\tau+t_0}(r) = \phi^{\tau}(r^{\fg}_{t_0}(r)) \geq \phi^{\tau}(\bar{r}) = r^{\fg}_{\tau+1}(\bar{r})$ where $\phi^\tau$ is its $\tau$-th composition and $\phi^0(r)=r$. Since $r^{\fg}_{\tau+1}(\bar{r}) = r^{\fg}_{\tau+1}(r^{\fg}_{\bar{t}}(\Tilde{r})) = r^{\fg}_{\tau+\bar{t}}(\Tilde{r})$, it follows that 
$$F_0(-r^{\fg}_{\tau+t_0}(r)) \leq F_0(-r^{\fg}_{\tau+\bar{t}}(\Tilde{r})).$$ 
Thus, it follows from \eqref{eq:finitesum} that for any $r\geq 0$,
$ \sum_{t=1}^{\infty} F_{0}(- r^{\fg}_t(r)) < \infty$,  as required. The case for action $\fb$ follows from a symmetric argument. 
\end{proof}

Now, we are ready to prove Proposition \ref{prop:learning_immediate_herd}. 
\begin{proof}[Proof of Proposition \ref{prop:learning_immediate_herd}]
We show that $(i) \Rightarrow (ii)$, $(ii) \Rightarrow (iii)$, and $(iii) \Rightarrow (i)$. To show the first implication, we prove the contrapositive statement. Suppose $\Prob_{0}[S < \infty]>0$. This implies that there exists a sequence of action realizations $(b_1, b_2, \ldots, b_{k-1}, b_k = \ldots = a)$ for some action $a \in \{\fb, \fg\}$ such that 
\[
\Prob_{0}[a_t = b_t, \forall t\geq 1] >0.
\]
By stationarity, there exists some $\pi' \in (0, 1)$ such that
$$\Prob_{0,\pi'}[\Bar{a} = a]>0,$$ 
which contradicts (i). 

To show the second implication, suppose towards a contradiction that there exists some $a \in \{\fg, \fb\}$ such that $\Prob_{0,\pi}[\Bar{a} = a] >0$  for all prior $\pi \in (0, 1)$. In particular, it holds for the uniform prior. Since the event $\{\bar{a} = a\}$ is contained in the event $\{S < \infty\}$, 
    \[
    0 <\Prob_{0}[\Bar{a} = a] \leq \Prob_{0}[S < \infty].
    \]
This implies that $\Prob_0[S =\infty] <1$, a contradiction to (ii). 

Finally, we show the last implication by contraposition. Suppose that there exists some $a \in \{\fg, \fb\}$ such that $\Prob_{0, \pi}[\bar{a} = a] >0$ for some prior $\pi\in (0,1)$. By Lemma \ref{prop:pos_herd}, it also holds for all prior $\pi \in (0,1)$, which is a contradiction to (iii). This concludes the proof of Proposition \ref{prop:learning_immediate_herd}.
\end{proof}
\begin{proof}[Proof of Proposition \ref{cor:immd_learn_uniform}]
   By the equivalence between (ii) and (iii) in Proposition \ref{prop:learning_immediate_herd} and Proposition \ref{thm:main}, we have shown Proposition \ref{cor:immd_learn_uniform}.
\end{proof}

\bibliographystyle{chicago}
\bibliography{ref.bib}

\end{document}